\newtheorem{Theorem}{Theorem}
\newtheorem{Proposition}{Proposition}
\newtheorem{Definition}{Definition}
\newtheorem{Corollary}{Corollary}
\newtheorem{Lemma}{Lemma}
\def\imp{\mathbin{\Rightarrow}}
\def\conj{\mathbin{\wedge}}
\def\disj{\mathbin{\vee}}
\def\ex{{\exists}}
\def\fa{{\forall}}
\def\eqi{\Leftrightarrow}
\def\ra{\rightarrow}
\def\lra{\hookrightarrow}
\def\o{o}
\def\assign{\mathrel{\mathop:}=}
\def\V{\mathcal{V}}
\def\C{\mathcal{C}}
\def\e{\mathfrak{e}}
\def\f{\mathfrak{f}}
\def\p{\mathfrak{p}}
\def\fv{\text{\textit{FV}}}
\def\dneq{\text{\textsc{DNE}}^{\e}}
\def\dns{\text{\textsc{DNS}}}
\begin{document}

\title{Kuroda's Translation for Higher-Order Logic}
\author{Thomas Traversié}
\date{}
\renewcommand\Affilfont{\itshape\small}
\affil{Université Paris-Saclay, CentraleSupélec, MICS \\ Gif-sur-Yvette, France}
\affil{Université Paris-Saclay, Inria, CNRS, ENS-Paris-Saclay, LMF \\ Gif-sur-Yvette, France}
\renewcommand\Affilfont{\small}
\affil{\texttt{thomas.traversie@centralesupelec.fr}}

\maketitle

\begin{abstract}
Kuroda's translation embeds first-order classical logic into intuitionistic logic, such that a formula and its translation are equivalent in classical logic.
Recently, Brown and Rizkallah extended this translation to higher-order logic.
However, they showed that the translation fails in the presence of functional extensionality, and they did not prove the classical equivalence between a formula and its translation.

In this paper, we emphasize different conditions under which Kuroda's translation works in the presence of functional extensionality, including the double-negation shift.
We show that the classical equivalence between a formula and its translation does not necessarily hold in higher-order logic. 
However, it is sufficient to assume both functional extensionality and propositional extensionality. 
\end{abstract}

\maketitle

\section{Introduction}

The principle of excluded middle $A \disj \neg A$, or equivalently the double-negation elimination $\neg\neg A \imp A$, are classical logic postulates which are not derivable in intuitionistic logic.
It follows that provability in classical logic does not generally imply provability in intuitionistic logic. 
It is however possible to embed classical logic into intuitionistic logic.
Glivenko~\cite{glivenko1928} proved that, for any provable formula $A$ in classical propositional logic, there exists an intuitionistic proof of its double negation $\neg\neg A$. 
Kolmogorov~\cite{kolmogorov1925}, Gödel~\cite{godel1933}, Gentzen~\cite{gentzen1936}, and Kuroda~\cite{kuroda1951} developed translations $A \mapsto A^{\star}$ that insert double negations inside formulas. 
For any first-order context $\Gamma$ and formula $A$, such translations satisfy two properties: 
\begin{enumerate}[label=(\roman*)]
\item if $\Gamma \vdash_c A$ then $\Gamma^{\star} \vdash_i A^{\star}$ (soundness),
\item $\vdash_c A^{\star} \eqi A$ (characterization).
\end{enumerate}
Both properties~\cite{troelstra_vandalen,gaspar,kuroda_j} are essential to fully characterize translations that embed classical logic into intuitionistic logic. 
The soundness property ensures that judgments are translated from classical logic to intuitionistic logic. 
The characterization property ensures that the meaning of formulas is preserved by the translation---this is captured by checking that $A$ and $A^{\star}$ are equivalent in classical logic.
The characterization property is sometimes replaced~\cite{troelstra_vandalen,various_neg_trans} by:
\begin{itemize}
\item[(iii)] if $\Gamma^{\star} \vdash_i A^{\star}$ then $\Gamma \vdash_c A$ (reverse translation). 
\end{itemize}
The reverse translation property ensures that if a translated judgment is provable in intuitionistic logic, then the original judgment is provable in classical logic.
It follows from the characterization property.

Brown and Rizkallah~\cite{brown_rizkallah} recently investigated double-negation translations for higher-order logic. 
They showed that Kolmogorov's translation and the Gödel-Gentzen translation cannot be directly extended to higher-order logic, as they do not preserve $\beta$-conversion. 
They proved that Kuroda's translation can be extended to higher-order logic so that it satisfies the soundness property, but they did not examine the characterization and reverse translation properties. 
Moreover, they looked into different forms of extensionality, and showed that the soundness property fails in the presence of functional extensionality.

In this paper, we investigate further the extension of Kuroda's translation to higher-order logic. Our contribution is twofold.

First, we prove that the soundness property holds in the presence of functional extensionality, assuming a weak form of functional extensionality. 
We discuss alternative conditions, including the double-negation shift.

Second, we show that the characterization property and the reverse translation property---which are straightforward in first-order logic---do not necessarily hold in higher-order logic. 
We identify a necessary and sufficient condition for the characterization property.
We prove that both properties hold under functional extensionality and propositional extensionality.

The basics of higher-order logic are recalled in Section~\ref{sec_hol}. We extend Kuroda's translation to higher-order logic in Section~\ref{sec_kuroda_hol}. 
We investigate the soundness property in Section~\ref{sec_soundness}, the characterization property in Section~\ref{sec_charac}, and the reverse translation property in Section~\ref{sec_rev}.

\section{Higher-order logic}
\label{sec_hol}

In higher-order logic, functions, predicates and propositions are all terms of Church's simple type theory~\cite{church}. 
Types are defined inductively: $\iota$ is the type of individuals, $\o$ is the type of propositions, and if $\tau$ and $\sigma$ are types then $\tau \ra \sigma$ is a type. 
For every type $\tau$, let $\V_{\tau}$ be the set of variables of type $\tau$ and $\C_{\tau}$ be a set of constants of type $\tau$. 
The set of variables $\V \assign \bigcup_{\tau} \V_{\tau}$ and the set of constants $\C \assign \bigcup_{\tau} \C_{\tau}$ are assumed to be disjoint. 
For any set of constants $\C$, the sets $\Lambda^{\C}_{\tau}$ of terms of type $\tau$ are defined by induction:
\begin{itemize}
    \item For every $x \in \V_{\tau}$, $x \in \Lambda^{\C}_{\tau}$.   
    \item For every $c \in \C_{\tau}$, $c \in \Lambda^{\C}_{\tau}$.
    \item For every $x \in \V_{\tau}$ and $t \in \Lambda^{\C}_{\sigma}$, then $(\lambda x. t) \in \Lambda^{\C}_{\tau \ra \sigma}$.
    \item For every $t \in \Lambda^{\C}_{\tau \ra \sigma}$ and $u \in \Lambda^{\C}_{\tau}$, then $(t u) \in \Lambda^{\C}_{\sigma}$.
\end{itemize}
$\lambda x. t$ is a $\lambda$-abstraction and $t u$ is an application. Computation is introduced in this $\lambda$-calculus thanks to the $\beta$-reduction rule $(\lambda x. t) u \lra t[x \leftarrow u]$, where $t[x \leftarrow u]$ corresponds to the term $t$ in which $x$ has been substituted by $u$. We denote $\equiv_{\beta}$ the congruence generated by $\beta$-reduction.

Formulas are terms of type $\o$. 
There are  particular constants defining the logical connectives and quantifiers: tautology $\top$ and contradiction $\bot$ of type $\o$, negation $\neg$ of type $\o \ra \o$, implication $\imp$, conjunction $\conj$ and disjunction $\disj$ of type $\o \ra \o \ra \o$, and quantifiers $\fa_{\tau}$ and $\ex_{\tau}$ of type $(\tau \ra \o) \ra \o$. 
For convenience, terms of the form $\fa_{\tau} (\lambda x. A)$ and $\ex_{\tau} (\lambda x. A)$ are simply abbreviated as $\fa x. A$ and $\ex x. A$. 
The logical biconditional $\eqi$ is defined by $A \eqi B \assign (A \imp B) \conj (B \imp A)$. 
For every type $\tau$, we define an equality symbol $=_{\tau}$ of type $\tau \ra \tau \ra \o$. The equality symbols are infix, and we write $t = u$ when there is no ambiguity on the type $\tau$.

A context $\Gamma$ is a finite sequence of formulas. 
We write $\fv(t_1, \ldots, t_n)$ for the set of free variables that occur in the terms $t_1, \ldots, t_n$. 
The natural deduction rules for \textbf{classical logic} are given in Figure~\ref{fig_rules_hol}. 
The principle of excluded middle is equivalent to the \textit{double-negation elimination} $\neg\neg A \imp A$. 
The \textit{double-negation equivalence} $\neg\neg A \eqi A$ holds in classical logic. 
The natural deduction rules for \textbf{intuitionistic logic} are those of classical logic, except the principle of excluded middle \textsc{PEM}. 
The standard weakening inference rule is admissible in intuitionistic logic. 
The natural deduction rules for equality are given in Figure~\ref{fig_rules_eq}.

We write $\Gamma \vdash_i A$ when $\Gamma \vdash A$ is derivable in intuitionistic logic, and $\Gamma \vdash_c A$ when it is derivable in classical logic. 
For any $k \in \{ c, i \}$, we write $\Gamma \vdash_k^* A$ with $* \in \{ \e, \e\p, \e\f, \e\f\p \}$ when $\Gamma \vdash_k A$ is derivable with possibly additional inference rules: with \textsc{Eq-I} and \textsc{Eq-E} if $\e$ is in $*$, with \textsc{PropExt} if $\p$ is in $*$, and with \textsc{FunExt} if $\f$ is in $*$.

\begin{figure}[tb]
\begin{mathpar}
\inferrule*[right={Imp-I}]{\Gamma, A \vdash B}{\Gamma \vdash A \imp B}

\inferrule*[right={Imp-E}]{\Gamma \vdash A \imp B \\ \Gamma \vdash A}{\Gamma \vdash B}
\end{mathpar}
\begin{mathpar}
\inferrule*[right={And-I}]{\Gamma \vdash A \\ \Gamma \vdash B}{\Gamma \vdash A \conj B}

\inferrule*[right={And-EL}]{\Gamma \vdash A \conj B}{\Gamma \vdash A}

\inferrule*[right={And-ER}]{\Gamma \vdash A \conj B}{\Gamma \vdash B}
\end{mathpar}
\begin{mathpar}
\inferrule*[right={Or-IL}]{\Gamma \vdash A}{\Gamma \vdash A \disj B}

\inferrule*[right={Or-IR}]{\Gamma \vdash B}{\Gamma \vdash A \disj B}

\inferrule*[right={Or-E}]{\Gamma \vdash A \disj B \\ \Gamma, A \vdash C \\ \Gamma, B \vdash C}{\Gamma \vdash C}
\end{mathpar}
\begin{mathpar}
\inferrule*[right={Not-I}]{\Gamma, A \vdash \bot}{\Gamma \vdash \neg A}

\inferrule*[right={Not-E}]{\Gamma \vdash \neg A \\ \Gamma \vdash A}{\Gamma \vdash \bot}
\end{mathpar}
\begin{mathpar}
\inferrule*[right={Bot-E}]{\Gamma \vdash \bot}{\Gamma \vdash A}

\inferrule*[right={Top-I}]{ }{\Gamma \vdash \top}
\end{mathpar}
\begin{mathpar}
\inferrule*[right={All-I}]{\Gamma \vdash P x \\ x \notin \fv(\Gamma,P)}{\Gamma \vdash \fa P}

\inferrule*[right={All-E}]{\Gamma \vdash \fa P}{\Gamma \vdash P t}
\end{mathpar}
\begin{mathpar}
\inferrule*[right={Ex-I}]{\Gamma \vdash P t}{\Gamma \vdash \ex P}

\inferrule*[right={Ex-E}]{\Gamma \vdash \ex P \\ \Gamma, P x \vdash A \\ x \notin \fv(\Gamma, P, A)}{\Gamma \vdash A}
\end{mathpar}
\begin{mathpar}
\inferrule*[right={Ax}]{ }{\Gamma, A, \Delta \vdash A}

\inferrule*[right={Conv}]{\Gamma \vdash A \\ A \equiv_{\beta} B}{\Gamma \vdash B}

\inferrule*[right={PEM}]{ }{\Gamma \vdash A \disj \neg A}
\end{mathpar}
\caption{Natural deduction rules for higher-order logic.}
\label{fig_rules_hol}
\end{figure}

\begin{figure}[tb]
\begin{mathpar}
\inferrule*[right={Eq-I}]{ }{\Gamma \vdash u = u}

\inferrule*[right={Eq-E}]{\Gamma \vdash P u \\ \Gamma \vdash u = v}{\Gamma \vdash P v}

\inferrule*[right={FunExt}]{\Gamma \vdash f x = g x \\ x \notin \fv(\Gamma, f, g)}{\Gamma \vdash f = g}

\inferrule*[right={PropExt}]{\Gamma \vdash A \eqi B}{\Gamma \vdash A = B}
\end{mathpar}
\caption{Natural deduction rules for equality.}
\label{fig_rules_eq}
\end{figure}
Let us recall some well-known results about intuitionistic logic.

\begin{Proposition}
\label{prop}
Let $A$ and $B$ be formulas, $P$ be a predicate, and $u$ and $v$ be two terms.
\begin{multicols}{2}
\begin{enumerate}
\item $\vdash_i \neg\neg \bot \imp \bot$ \label{prop_doublenegbot}
\item $\vdash_i \neg\neg \top \imp \top$ \label{prop_doublenegtop}
\item $\vdash_i \neg\neg (A \disj \neg A)$ \label{prop_doubleneg_middle}
\item $\vdash_i A \imp \neg\neg A$ \label{prop_doubleneg}
\item $\vdash_i \neg\neg\neg A \eqi \neg A$ \label{prop_tripleneg}
\item $\vdash_i \neg\neg (A \imp B) \eqi (\neg\neg A \imp \neg\neg B)$ \label{prop_doublenegimp}
\item $\vdash_i \neg\neg (A \conj B) \eqi (\neg\neg A \conj \neg\neg B)$ \label{prop_doublenegconj}
\item $\vdash_i \neg (A \disj B) \eqi (\neg A \conj \neg B)$ \label{prop_negdisj}
\item $\vdash_i \neg\neg \fa P \imp \fa x. \neg\neg (P x)$ \label{prop_doublenegfa}
\item $\vdash_i \neg \ex P \eqi \fa x. \neg (P x)$ \label{prop_negex}
\end{enumerate}
\end{multicols}
\end{Proposition}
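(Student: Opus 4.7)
The plan is to verify each of the ten items by a direct natural-deduction derivation in intuitionistic logic, using the rules of Figure~\ref{fig_rules_hol}. These are standard textbook facts about the double-negation monad and none presents any real difficulty. I would begin with items~(4) and~(1) as foundations. Item~(4) is immediate: apply \textsc{Not-E} to the two assumptions $A$ and $\neg A$, then discharge successively with \textsc{Not-I} and \textsc{Imp-I}. Item~(1) holds because $\vdash_i \neg \bot$ is trivial (by \textsc{Not-I} from \textsc{Ax}), so \textsc{Not-E} applied to the hypothesis $\neg\neg \bot$ against $\neg \bot$ yields $\bot$. Item~(2) is even simpler, since $\top$ is provable by \textsc{Top-I} without using the hypothesis at all.

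For~(3) and for the ``cancellation'' direction of each biconditional in~(5)--(7), the uniform pattern is: assume the outer double negation $\neg\neg C$, derive $\neg C$ by refuting the hypothesis we want to negate, then close with \textsc{Not-E}. In~(3), for instance, from $\neg(A \disj \neg A)$ one derives both $\neg A$ (assume $A$, then \textsc{Or-IL} and \textsc{Not-E}) and $\neg\neg A$ (symmetric using \textsc{Or-IR}), giving a contradiction. The converse directions of the biconditionals are typically obtained either as an instance of~(4) (e.g.\ the right-to-left of~(5)) or by pushing a contradiction through the relevant connective using \textsc{And-I}/\textsc{And-E} or \textsc{Imp-I}/\textsc{Imp-E}. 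Items~(8) and~(10) are standard De Morgan derivations, the latter using \textsc{Ex-I}, \textsc{Ex-E}, \textsc{All-I} and \textsc{All-E} in the obvious way, and each direction reduces to a three- or four-rule derivation.

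The only item requiring slight care is~(9), which mixes double negation with a universal quantifier. Given $\neg\neg \fa P$, I would introduce a fresh variable $x$ and assume $\neg(P x)$; to reach $\bot$, I further assume $\fa P$, apply \textsc{All-E} to get $P x$, and close with \textsc{Not-E}. This yields $\neg \fa P$, contradicting $\neg\neg \fa P$; a final \textsc{Not-I} and \textsc{All-I} then delivers $\fa x. \neg\neg(P x)$. The eigenvariable side-condition on \textsc{All-I} is satisfied because $x$ was chosen fresh, and I would check that the discharged assumptions do not reintroduce $x$ into the context. This is the only item whose converse is not asserted---that converse is the double-negation shift \textsc{DNS}, which is famously not intuitionistically provable in general---so the one-sided statement is essential rather than an obstacle.
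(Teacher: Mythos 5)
Your proposal is correct: each of the ten derivations you sketch is the standard intuitionistic one, and your treatment of the eigenvariable condition in item~(9) and of the one-directional nature of that item (its converse being \textsc{DNS}) is exactly right. The paper itself gives no proof beyond a citation to Troelstra and van Dalen, so your explicit derivations simply spell out the well-known arguments that the citation points to.
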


\begin{proof}
These results are well-known, see for example \cite[Chapter 2]{troelstra_vandalen}.
\end{proof}

\section{Kuroda's translation for higher-order logic}
\label{sec_kuroda_hol}

Kuroda's translation for first-order logic~\cite{kuroda1951} inserts a double negation in front of formulas and a double negation after every universal quantifier. More formally, we first define $A_{Ku}$ by induction on $A$:
\[
  \begin{array}{llll}
        (A \imp B)_{Ku} \assign A_{Ku} \imp B_{Ku} &(\neg A)_{Ku} \assign \neg A_{Ku} &P_{Ku} \assign P \text{ if $P$ atomic} \\
        (A \conj B)_{Ku} \assign A_{Ku} \conj B_{Ku} &\top_{Ku} \assign \top &(\fa x. A)_{Ku} \assign \fa x. \neg\neg A_{Ku} \\
        (A \disj B)_{Ku} \assign A_{Ku} \disj B_{Ku} &\bot_{Ku} \assign \bot &(\ex x. A)_{Ku} \assign \ex x. A_{Ku} \\
  \end{array}
\]
and then we set $A^{Ku} \assign \neg\neg A_{Ku}$. We extend Kuroda's translation to the terms of higher-order logic.

\begin{Definition}[Kuroda's translation for higher-order logic] Let $A$ be a formula in higher-order logic. Its Kuroda's translation is $A^{Ku} \assign \neg\neg A_{Ku}$, where $A_{Ku}$ is inductively defined by:
\[
  \begin{array}{lll}
        x_{Ku} &\assign &x \\
        c_{Ku} &\assign &\left\{
        \begin{array}{ll}
            \lambda p. \fa x. \neg\neg (p x) &\text{ if $c = \fa$} \\
            c &\text{ otherwise} \\
        \end{array}
        \right.\\
        (\lambda x. t)_{Ku} &\assign &\lambda x. t_{Ku} \\
        (t u)_{Ku} &\assign &t_{Ku} u_{Ku} \\
  \end{array}
\]
\end{Definition}
While in first-order logic we have $(A[z \leftarrow w])^{Ku} = A^{Ku}[z \leftarrow w]$, this result cannot hold anymore in higher-order logic, since $w$ is modified when it contains universal quantifiers $\fa$. Instead, we have $(A[z \leftarrow w])^{Ku} = A^{Ku}[z \leftarrow w_{Ku}]$.

\begin{Proposition}
\label{prop_Ku_subst_term}
For any term $t$, we have $(t[z \leftarrow w])_{Ku} = t_{Ku}[z \leftarrow w_{Ku}]$.
\end{Proposition}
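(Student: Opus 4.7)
The plan is to prove this by straightforward structural induction on $t$, checking that each clause of the definition of $(\cdot)_{Ku}$ commutes with capture-avoiding substitution.

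First I would handle the base cases. For a variable $x$, I split on whether $x = z$. If $x = z$, then $(z[z \leftarrow w])_{Ku} = w_{Ku}$ and $z_{Ku}[z \leftarrow w_{Ku}] = z[z \leftarrow w_{Ku}] = w_{Ku}$; if $x \neq z$, both sides reduce to $x$. For a constant $c$, the left-hand side is $c_{Ku}$ because the substitution does nothing on $c$, and the right-hand side is also $c_{Ku}$ because $c_{Ku}$ (whether $c$ itself or $\lambda p.\, \fa x.\, \neg\neg(p\,x)$ in the case $c = \fa$) contains no free variable, hence $z$ cannot occur free in it and the substitution is vacuous.

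Next I would dispatch the inductive cases. For an application $t\,u$, unfolding the definitions and applying the induction hypotheses on $t$ and $u$ gives the equality immediately: $((t\,u)[z \leftarrow w])_{Ku} = (t[z \leftarrow w])_{Ku}\,(u[z \leftarrow w])_{Ku} = t_{Ku}[z \leftarrow w_{Ku}]\,u_{Ku}[z \leftarrow w_{Ku}] = (t_{Ku}\,u_{Ku})[z \leftarrow w_{Ku}]$. For an abstraction $\lambda x.\, s$, by the usual Barendregt convention we may assume $x \neq z$ and $x \notin \fv(w)$, so that the substitution propagates under the binder; applying the induction hypothesis to $s$ then yields the result.

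The main thing to be careful about, rather than a real obstacle, is the constant case $c = \fa$: one must verify that the bound variables $p$ and $x$ introduced in $\lambda p.\, \fa x.\, \neg\neg(p\,x)$ do not accidentally clash with $z$ or any free variable of $w_{Ku}$. This is handled by the standard $\alpha$-renaming convention, so the substitution on $c_{Ku}$ is vacuous and the two sides match. All other cases are purely syntactic bookkeeping, and the induction goes through cleanly.
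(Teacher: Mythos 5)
Your proof is correct and takes the same route as the paper, which simply states ``by induction on the term $t$'' and leaves the case analysis implicit. Your elaboration of the cases --- in particular the observation that $c_{Ku}$ is closed (so the substitution is vacuous even when $c = \fa$) and the use of the Barendregt convention under binders --- fills in exactly the details the paper omits.
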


\begin{proof}
By induction on the term $t$.
\end{proof}

\begin{Corollary}
\label{prop_Ku_subst}
For any higher-order formula $A$, we have $(A[z \leftarrow w])^{Ku} = A^{Ku}[z \leftarrow w_{Ku}]$.
\end{Corollary}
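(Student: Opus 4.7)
The plan is to reduce the claim directly to Proposition \ref{prop_Ku_subst_term} by unfolding the definition $A^{Ku} \assign \neg\neg A_{Ku}$ on both sides of the desired equation, and then observing that the outer double negation interacts trivially with substitution.

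More concretely, I would first rewrite the left-hand side as
\[
(A[z \leftarrow w])^{Ku} = \neg\neg\, (A[z \leftarrow w])_{Ku},
\]
and apply Proposition \ref{prop_Ku_subst_term} (with $t \assign A$, noting that a higher-order formula is just a term of type $\o$) to turn this into $\neg\neg\, (A_{Ku}[z \leftarrow w_{Ku}])$. Then I would rewrite the right-hand side as
\[
A^{Ku}[z \leftarrow w_{Ku}] = (\neg\neg A_{Ku})[z \leftarrow w_{Ku}].
\]
Since $\neg$ is a constant of type $\o \ra \o$ (in particular, it is not the variable $z$), substitution passes through the two outer applications unchanged, giving $\neg\neg\, (A_{Ku}[z \leftarrow w_{Ku}])$, which matches the left-hand side.

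There is essentially no obstacle here: the real content of the corollary is Proposition \ref{prop_Ku_subst_term}, and the only thing to check is that the outer $\neg\neg$ wrapper in the definition of $(\cdot)^{Ku}$ commutes with substitution of a variable, which is immediate from the definition of substitution on applications and from the fact that $\neg$ is a constant rather than a variable. So the proof amounts to a one-line calculation invoking Proposition \ref{prop_Ku_subst_term}.
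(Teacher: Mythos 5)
Your proof is correct and matches the paper's intent: the corollary is stated without proof precisely because it follows immediately from Proposition~\ref{prop_Ku_subst_term} applied to $A$ as a term of type $\o$, together with the observation that the outer $\neg\neg$ commutes with substitution since $\neg$ is a constant. Nothing to add.
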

Higher-order logic is defined using simple type theory, so $\beta$-conversions may be used in the derivations. As Kuroda's proof relies on the fact that we can translate each step of the derivation, each time we have $A \equiv_{\beta} B$ in the classical derivation, we want to use $A^{Ku} \equiv_{\beta} B^{Ku}$ in the intuitionistic derivation.

\begin{Proposition}
\label{conv_hol_term_Ku}
For any terms $t$ and $u$, if $t \equiv_{\beta} u$ then $t_{Ku} \equiv_{\beta} u_{Ku}$.
\end{Proposition}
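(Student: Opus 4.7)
The plan is to first prove that a single $\beta$-reduction step is preserved by the translation---namely, that whenever $t \to_\beta u$ in one step, $t_{Ku} \to_\beta u_{Ku}$ in one step---and then extend the result to the congruence $\equiv_\beta$. The first part proceeds by structural induction on $t$, or equivalently by induction on the derivation of the one-step reduction.

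The base case is the root redex: $t = (\lambda x.\, t_1)\, t_2$ and $u = t_1[x \leftarrow t_2]$. Unfolding the translation gives $t_{Ku} = (\lambda x.\, t_{1,Ku})\, t_{2,Ku}$, which $\beta$-reduces in one step to $t_{1,Ku}[x \leftarrow t_{2,Ku}]$; by Proposition~\ref{prop_Ku_subst_term}, this equals $(t_1[x \leftarrow t_2])_{Ku} = u_{Ku}$. The compatibility cases---reductions occurring under $\lambda x.\,(\cdot)$, on the left of an application, or on the right---all follow immediately from the induction hypothesis, since the translation commutes with $\lambda$-abstraction and with application by definition, so a reduction at an internal position translates to a reduction at the corresponding internal position of the translated term.

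To extend to $\equiv_\beta$, I would use that $\equiv_\beta$ is the reflexive, symmetric, transitive closure of one-step $\beta$-reduction. Preservation then propagates along any finite zig-zag of one-step reductions witnessing $t \equiv_\beta u$, which yields $t_{Ku} \equiv_\beta u_{Ku}$.

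I do not anticipate any genuine obstacle: the substantive work has already been done in Proposition~\ref{prop_Ku_subst_term}, and what remains is pure bookkeeping. One mild subtlety worth flagging is that translating the constant $\fa$ to $\lambda p.\, \fa x.\, \neg\neg (p\,x)$ introduces new $\beta$-redexes in $t_{Ku}$ that were absent in $t$; these are harmless for the present statement---they are simply not touched by the image of any one-step reduction in $t$---but they do mean that Kuroda's translation of a $\beta$-normal form is generally not itself $\beta$-normal, a point that only matters for applications of the proposition rather than for its proof.
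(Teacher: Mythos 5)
Your proposal is correct and follows essentially the same route as the paper: handle the root redex via Proposition~\ref{prop_Ku_subst_term} (since $((\lambda x.\,t)\,u)_{Ku} = (\lambda x.\,t_{Ku})\,u_{Ku}$ reduces to $t_{Ku}[x \leftarrow u_{Ku}] = (t[x \leftarrow u])_{Ku}$), then close under context, reflexivity, symmetry, and transitivity. The extra remark about the new redexes introduced by the translation of $\fa$ is a sensible observation but, as you note, plays no role in the proof.
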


\begin{proof}
We have $((\lambda x. t) u)_{Ku} = (\lambda x. t_{Ku}) u_{Ku} \lra t_{Ku}[x \leftarrow u_{Ku}]$, so that $((\lambda x. t) u)_{Ku} \equiv_{\beta} (t[x \leftarrow u])_{Ku}$ using Proposition~\ref{prop_Ku_subst_term}. Closure by context, reflexivity, symmetry, and transitivity are immediate.
\end{proof}

\begin{Corollary}
\label{conv_hol_Ku}
For any higher-order formulas $A$ and $B$, if $A \equiv_{\beta} B$ then $A_{Ku} \equiv_{\beta} B_{Ku}$ and $A^{Ku} \equiv_{\beta} B^{Ku}$.
\end{Corollary}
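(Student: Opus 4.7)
The plan is to observe that this corollary is essentially an immediate consequence of Proposition~\ref{conv_hol_term_Ku} together with the fact that $\beta$-conversion is a congruence (closed under application of contexts).

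For the first part, since higher-order formulas are just terms of type $\o$, if $A \equiv_{\beta} B$ then by Proposition~\ref{conv_hol_term_Ku} applied to the terms $A$ and $B$, we directly obtain $A_{Ku} \equiv_{\beta} B_{Ku}$. No induction is required at this stage because the proposition has already been established for arbitrary terms.

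For the second part, recall that by definition $A^{Ku} = \neg\neg A_{Ku}$ and $B^{Ku} = \neg\neg B_{Ku}$, where the double negation is a shorthand for the term $\neg (\neg A_{Ku})$, i.e., the application of the constant $\neg$ twice. From $A_{Ku} \equiv_{\beta} B_{Ku}$ and the fact that $\equiv_{\beta}$ is closed under taking a common context (compatibility with application, which is part of the congruence), we infer $\neg \neg A_{Ku} \equiv_{\beta} \neg \neg B_{Ku}$, which is exactly $A^{Ku} \equiv_{\beta} B^{Ku}$.

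I do not anticipate any genuine obstacle: the work has already been done in Proposition~\ref{conv_hol_term_Ku}, and the only additional observation is that prefixing two applications of $\neg$ preserves $\beta$-equivalence because $\equiv_{\beta}$ is a congruence. The corollary is therefore a two-line consequence, and the only thing to be careful about is mentioning explicitly that formulas are a particular case of terms so that Proposition~\ref{conv_hol_term_Ku} indeed applies.
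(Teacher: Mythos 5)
Your proof is correct and matches the paper's intent: the paper states this as an immediate corollary of Proposition~\ref{conv_hol_term_Ku} (with no separate proof given), and your two observations --- that formulas are terms of type $\o$, and that $\equiv_{\beta}$ is a congruence so prefixing $\neg\neg$ preserves it --- are exactly the justification.
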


\section{Soundness property}
\label{sec_soundness}

In this section, we prove the soundness property, with and without the extensionality principles. In particular, we identify different conditions that are sufficient to derive the soundness property in the presence of functional extensionality.

Note that we state and prove the soundness property with $\Gamma_{Ku}$ instead of $\Gamma^{Ku}$, thus introducing fewer double negations in the context.
All the results can be simply adapted to have the version with $\Gamma^{Ku}$.

\subsection{Without extensionality}

To prove that $\Gamma \vdash_c A$ implies $\Gamma_{Ku} \vdash_i A^{Ku}$, we do not follow Brown and Rizkallah proof~\cite{brown_rizkallah}, which proceeds in two steps---transforming $A$ into a formula $A'$ that does not contain any universal quantifier and then applying to $A'$ an extension of Glivenko's theorem to higher-order logic without universal quantifiers. 
We proceed to a more direct proof that follows the intuition of the first-order case.

\begin{Theorem}
\label{thm_c2i}
Let $A$ be a formula and $\Gamma$ be a context in higher-order logic.
If $\Gamma \vdash_c A$ then $\Gamma_{Ku} \vdash_i A^{Ku}$.
\end{Theorem}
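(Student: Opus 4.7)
The plan is to proceed by induction on the derivation of $\Gamma \vdash_c A$, showing for each inference rule that the Kuroda-translated conclusion is derivable in intuitionistic logic from the translations of the premises. Throughout, I will use the fact that $A^{Ku} = \neg\neg A_{Ku}$ is $\neg\neg$-stable, so that $\neg\neg$ behaves like a monad on translated formulas --- this is captured by most clauses of Proposition~\ref{prop}.

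For the purely propositional rules (\textsc{Imp-I/E}, \textsc{And-I/E}, \textsc{Or-I/E}, \textsc{Not-I/E}, \textsc{Bot-E}, \textsc{Top-I}), the Kuroda translation commutes with the connectives, so the cases reduce to standard intuitionistic facts: the commutations of $\neg\neg$ with $\imp$, $\conj$ and $\neg$ (Proposition~\ref{prop}, items \ref{prop_doublenegimp}, \ref{prop_doublenegconj}, \ref{prop_tripleneg}) handle \textsc{Imp-I/E}, \textsc{And-I/E} and \textsc{Not-I/E}, while \textsc{Or-I/E} is handled directly by the monad structure of $\neg\neg$. The \textsc{Ax} rule is Proposition~\ref{prop}.\ref{prop_doubleneg}; \textsc{PEM} is exactly Proposition~\ref{prop}.\ref{prop_doubleneg_middle}, noting that $(A \disj \neg A)_{Ku} = A_{Ku} \disj \neg A_{Ku}$ by definition; \textsc{Conv} is handled by Corollary~\ref{conv_hol_Ku}, which lifts the required $\beta$-conversions to the translated side.

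The existential rules are straightforward since $\ex_{Ku} = \ex$: for \textsc{Ex-I}, push $\ex$-introduction through the outer $\neg\neg$; for \textsc{Ex-E}, use that $A^{Ku}$ is of the form $\neg\neg(\cdots)$, which is intuitionistically invariant under the prefixing of $\neg\neg$ in its hypothesis. Substitution of witnesses in the quantifier rules is justified by Corollary~\ref{prop_Ku_subst}, which guarantees $(Pt)_{Ku} = P_{Ku} t_{Ku}$.

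The universal rules are where the higher-order subtlety appears, and this is the step I expect to be the main obstacle. Here one must unfold the translated constant $\fa_{Ku} = \lambda p. \fa x. \neg\neg(p x)$ by $\beta$-reduction (using \textsc{Conv}) so that $(\fa P)^{Ku} \equiv_\beta \neg\neg \fa x. \neg\neg (P_{Ku} x)$. For \textsc{All-I}, from the translated premise $\Gamma_{Ku} \vdash_i \neg\neg (P_{Ku} x)$ with $x$ fresh, apply \textsc{All-I} to obtain $\fa x. \neg\neg(P_{Ku} x)$, then prefix with $\neg\neg$ via Proposition~\ref{prop}.\ref{prop_doubleneg}. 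For \textsc{All-E}, starting from $\Gamma_{Ku} \vdash_i \neg\neg \fa x. \neg\neg(P_{Ku} x)$, apply Proposition~\ref{prop}.\ref{prop_doublenegfa} to obtain $\fa x. \neg\neg\neg\neg(P_{Ku} x)$, instantiate with $t_{Ku}$, and then collapse $\neg\neg\neg\neg$ to $\neg\neg$ using Proposition~\ref{prop}.\ref{prop_tripleneg}, giving exactly $(P t)^{Ku}$. This is the only place where the higher-order-specific shape of $\fa_{Ku}$, together with Proposition~\ref{prop}.\ref{prop_doublenegfa}, really comes into play, and carefully tracking the $\beta$-equivalences via Corollary~\ref{conv_hol_Ku} is the main bookkeeping burden of the proof.
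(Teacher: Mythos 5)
Your proposal is correct and follows essentially the same route as the paper: induction on the classical derivation, with the universal-quantifier cases handled exactly as in the paper via the unfolding of $\fa_{Ku}$, Proposition~\ref{prop}(\ref{prop_doubleneg}), Proposition~\ref{prop}(\ref{prop_doublenegfa}) and Proposition~\ref{prop}(\ref{prop_tripleneg}), and \textsc{Conv} via Corollary~\ref{conv_hol_Ku}. The only cosmetic difference is that you organize the disjunction and existential cases around the Kleisli/monad structure of $\neg\neg$ where the paper argues explicitly through $\neg(A \disj B) \eqi \neg A \conj \neg B$ and $\neg\ex P \eqi \fa x.\neg(Px)$, which amounts to the same derivations.
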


\begin{proof}
By induction on the derivation:
\begin{itemize}
\item \underline{\textsc{All-I}}: By induction, we have $\Gamma_{Ku} \vdash_i\neg\neg (P_{Ku} x)$. 
Using \textsc{All-I}, we derive $\Gamma_{Ku} \vdash_i\fa x. \neg\neg (P_{Ku} x)$. 
We conclude using Proposition~\ref{prop}(\ref{prop_doubleneg}).
\item \underline{\textsc{All-E}}: By induction, we have $\Gamma_{Ku} \vdash_i\neg\neg \fa x. \neg\neg (P_{Ku} x)$. 
It follows from Proposition~\ref{prop}(\ref{prop_doublenegfa}) that $\Gamma_{Ku} \vdash_i\fa x. \neg\neg\neg\neg (P_{Ku} x)$. Using \textsc{All-E}, we obtain $\Gamma_{Ku} \vdash_i\neg\neg\neg\neg (P_{Ku} t_{Ku})$. 
We conclude using Proposition~\ref{prop}(\ref{prop_tripleneg}). 
\item \underline{\textsc{Ex-I}}: By induction, we have $\Gamma_{Ku} \vdash_i\neg\neg (P_{Ku} t_{Ku})$. 
We write $\Delta$ for the context $\Gamma_{Ku}, \fa x. \neg (P_{Ku} x)$.
\begin{prooftree}
\AxiomC{Hypothesis}
\RightLabel{Weakening}
\UnaryInfC{$\Delta \vdash_i\neg\neg (P_{Ku} t_{Ku})$}
\AxiomC{}
\RightLabel{\textsc{Ax}}
\UnaryInfC{$\Delta \vdash_i\fa x. \neg (P_{Ku} x)$}
\RightLabel{\textsc{All-E}}
\UnaryInfC{$\Delta \vdash_i\neg (P_{Ku} t_{Ku})$}
\RightLabel{\textsc{Not-E}}
\BinaryInfC{$\Delta \vdash_i\bot$}
\RightLabel{\textsc{Not-I}}
\UnaryInfC{$\Gamma_{Ku} \vdash_i\neg \fa x. \neg (P_{Ku} x)$}
\RightLabel{Proposition~\ref{prop}(\ref{prop_negex})}
\UnaryInfC{$\Gamma_{Ku} \vdash_i\neg\neg \ex P_{Ku}$}
\end{prooftree}
\item \underline{\textsc{Ex-E}}: We get $\Gamma_{Ku} \vdash_i\neg\neg \ex P_{Ku}$ and $\Gamma_{Ku}, P_{Ku} x \vdash_i\neg\neg A_{Ku}$ by induction. 
We want to prove $\Gamma_{Ku} \vdash_i\neg\neg A_{Ku}$. 
We use \textsc{Not-I} and then \textsc{Not-E} with $\neg \fa x. \neg (P_{Ku} x)$. The proof of the first subgoal, where the context $\Delta$ denotes $\Gamma_{Ku}, \neg A_{Ku}$, is:
\begin{prooftree}
\AxiomC{Hypothesis}
\RightLabel{Weakening}
\UnaryInfC{$\Delta, P_{Ku} x \vdash_i\neg\neg A_{Ku}$}
\AxiomC{}
\RightLabel{\textsc{Ax}}
\UnaryInfC{$\Delta, P_{Ku} x \vdash_i\neg A_{Ku}$}
\RightLabel{\textsc{Not-E}}
\BinaryInfC{$\Delta, P_{Ku} x \vdash_i\bot$}
\RightLabel{\textsc{Not-I}}
\UnaryInfC{$\Delta \vdash_i\neg (P_{Ku} x)$}
\RightLabel{\textsc{All-I}}
\UnaryInfC{$\Delta \vdash_i\fa x. \neg (P_{Ku} x)$}
\RightLabel{Proposition~\ref{prop}(\ref{prop_doubleneg})}
\UnaryInfC{$\Delta \vdash_i\neg\neg \fa x. \neg (P_{Ku} x)$}
\end{prooftree}
The second subgoal $\Gamma_{Ku}, \neg A_{Ku} \vdash_i\neg \fa x. \neg (P_{Ku} x)$ derives from Proposition~\ref{prop}(\ref{prop_negex}), weakening and the first induction hypothesis.
\item \underline{\textsc{Or-IL}}: By induction, we have $\Gamma_{Ku} \vdash_i\neg\neg A_{Ku}$. 
We want to prove that $\Gamma_{Ku} \vdash_i\neg\neg (A_{Ku} \disj B_{Ku})$. 
We apply \textsc{Not-I} and then \textsc{Not-E} with $\neg A_{Ku}$. The first subgoal is $\Gamma_{Ku}, \neg (A_{Ku} \disj B_{Ku}) \vdash_i\neg\neg A_{Ku}$. 
It is proved by weakening the induction hypothesis. The second subgoal is:

\begin{prooftree}
\AxiomC{}
\RightLabel{\textsc{Ax}}
\UnaryInfC{$\Gamma_{Ku}, \neg (A_{Ku} \disj B_{Ku}) \vdash_i\neg (A_{Ku} \disj B_{Ku})$}
\RightLabel{Proposition~\ref{prop}(\ref{prop_negdisj})}
\UnaryInfC{$\Gamma_{Ku}, \neg (A_{Ku} \disj B_{Ku}) \vdash_i\neg A_{Ku} \conj \neg B_{Ku}$}
\RightLabel{\textsc{And-EL}}
\UnaryInfC{$\Gamma_{Ku}, \neg (A_{Ku} \disj B_{Ku}) \vdash_i\neg A_{Ku}$}
\end{prooftree}
We proceed similarly for \underline{\textsc{Or-IR}}.
\item \underline{\textsc{Or-E}}: We have $\Gamma_{Ku} \vdash_i\neg\neg (A_{Ku} \disj B_{Ku})$, and $\Gamma_{Ku}, A_{Ku} \vdash_i\neg\neg C_{Ku}$, and $\Gamma_{Ku}, B_{Ku} \vdash_i\neg\neg C_{Ku}$ by induction. 
We want to prove $\Gamma_{Ku} \vdash_i\neg\neg C_{Ku}$. 
We use \textsc{Not-I} and \textsc{Not-E} with $\neg A_{Ku} \conj \neg B_{Ku}$.

For the first subgoal, we have to prove $\Gamma_{Ku}, \neg C_{Ku} \vdash_i\neg (\neg A_{Ku} \conj \neg B_{Ku})$. 
Using Proposition~\ref{prop}(\ref{prop_negdisj}), we have to show $\Gamma_{Ku}, \neg C_{Ku} \vdash_i\neg\neg (A_{Ku} \disj B_{Ku})$. 
We prove it by weakening the first induction hypothesis.

For the second subgoal, we have to prove $\Gamma_{Ku}, \neg C_{Ku} \vdash_i\neg A_{Ku} \conj \neg B_{Ku}$.
We prove both $\Gamma_{Ku}, \neg C_{Ku} \vdash_i\neg A_{Ku}$ and $\Gamma_{Ku}, \neg C_{Ku} \vdash_i\neg B_{Ku}$. 
Let us show $\Gamma_{Ku}, \neg C_{Ku} \vdash_i\neg A_{Ku}$. We use \textsc{Not-I} and \textsc{Not-E} with $\neg C_{Ku}$. 
We derive $\Gamma_{Ku}, \neg C_{Ku}, A_{Ku} \vdash_i\neg\neg C_{Ku}$ by weakening the third induction hypothesis and $\Gamma_{Ku}, \neg C_{Ku}, A_{Ku} \vdash_i\neg C_{Ku}$ by \textsc{Ax}.
We show $\Gamma_{Ku}, \neg C_{Ku} \vdash_i\neg B_{Ku}$ similarly. 
\item \underline{\textsc{Conv}} derives from Corollary~\ref{conv_hol_Ku}.
\item The remaining items are direct applications of Proposition~\ref{prop}.
\end{itemize}
\end{proof}

\subsection{With extensionality}

We now investigate the soundness property in the setting of extensionality. We want to prove that $\Gamma \vdash_c^* A$ implies $\Gamma_{Ku} \vdash_i^* A^{Ku}$ whatever $* \in \{ \e, \e\p, \e\f, \e\f\p \}$. Brown and Rizkallah showed that it fails in the presence of functional extensionality. To get around this problem, it is sufficient to assume $\fa f \fa g. (\fa x. \neg\neg (f x = g x)) \imp \neg\neg (f= g)$, written $\Delta^{\f}$.

\begin{Theorem}
Let $A$ be a formula and $\Gamma$ be a context in higher-order logic.
\begin{enumerate}
\item For $* \in \{ \e, \e\p \}$, if $\Gamma \vdash_c^* A$ then $\Gamma_{Ku} \vdash_i^* A^{Ku}$.
\item For $* \in \{ \e\f, \e\f\p \}$, if $\Gamma \vdash_c^* A$ then $\Delta^{\f}, \Gamma_{Ku} \vdash_i^* A^{Ku}$.
\end{enumerate}
\end{Theorem}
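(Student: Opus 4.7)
The plan is to extend the induction of Theorem~\ref{thm_c2i} by treating the four additional inference rules: $\textsc{Eq-I}$, $\textsc{Eq-E}$, $\textsc{PropExt}$ (only relevant when $\p \in *$), and $\textsc{FunExt}$ (only relevant when $\f \in *$). All cases from Theorem~\ref{thm_c2i} carry over verbatim, since each step uses only rules that remain available in $\vdash_i^*$. For part~1 no new case requires $\Delta^{\f}$, whereas in part~2 only the $\textsc{FunExt}$ case introduces it.

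For $\textsc{Eq-I}$, I note that $(u = u)_{Ku}$ equals $u_{Ku} = u_{Ku}$, so applying $\textsc{Eq-I}$ followed by Proposition~\ref{prop}(\ref{prop_doubleneg}) yields $\neg\neg(u_{Ku} = u_{Ku})$. For $\textsc{Eq-E}$, the induction hypotheses give $\Gamma_{Ku} \vdash_i^* \neg\neg(P_{Ku}\, u_{Ku})$ and $\Gamma_{Ku} \vdash_i^* \neg\neg(u_{Ku} = v_{Ku})$; I would assume $\neg(P_{Ku}\, v_{Ku})$, then further assume $u_{Ku} = v_{Ku}$ and apply $\textsc{Eq-E}$ with the predicate $\lambda x.\neg\neg(P_{Ku}\, x)$ to turn $\neg\neg(P_{Ku}\, u_{Ku})$ into $\neg\neg(P_{Ku}\, v_{Ku})$, contradicting the outer hypothesis. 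This produces $\neg(u_{Ku} = v_{Ku})$, and combining with the second induction hypothesis closes the case.

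For $\textsc{PropExt}$, the induction hypothesis is $\neg\neg(A_{Ku} \eqi B_{Ku})$, using that $(A \eqi B)_{Ku} = A_{Ku} \eqi B_{Ku}$. The same nested pattern applies: assume $\neg(A_{Ku} = B_{Ku})$, then $A_{Ku} \eqi B_{Ku}$, apply $\textsc{PropExt}$ to produce the inner contradiction, and finish with the induction hypothesis for the outer one. For $\textsc{FunExt}$, the induction hypothesis is $\Gamma_{Ku} \vdash_i^* \neg\neg(f_{Ku}\, x = g_{Ku}\, x)$; a straightforward structural induction shows that $(\cdot)_{Ku}$ preserves free variables, so $x \notin \fv(\Gamma_{Ku}, f_{Ku}, g_{Ku})$, and $\textsc{All-I}$ delivers $\fa x.\neg\neg(f_{Ku}\, x = g_{Ku}\, x)$. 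Instantiating $\Delta^{\f}$ at $f_{Ku}, g_{Ku}$ and applying $\textsc{Imp-E}$ then yields $\neg\neg(f_{Ku} = g_{Ku})$, which is exactly $(f = g)^{Ku}$.

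The only genuine obstacle is the $\textsc{FunExt}$ case: in pure intuitionistic logic there is no way to commute $\neg\neg$ with $\fa$, which is precisely the failure identified by Brown and Rizkallah. The axiom $\Delta^{\f}$ is engineered to repair exactly this single rule, leaving every other case of the induction unchanged.
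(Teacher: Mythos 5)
Your proposal is correct and follows essentially the same route as the paper: all cases of Theorem~\ref{thm_c2i} carry over, the new \textsc{Eq-I}, \textsc{Eq-E} and \textsc{PropExt} cases are handled by pushing double negations through (the paper does this by citing Proposition~\ref{prop}(\ref{prop_doubleneg}) and~(\ref{prop_doublenegimp}) where you unfold the same reasoning into explicit \textsc{Not-I}/\textsc{Not-E} steps), and the \textsc{FunExt} case is discharged by generalizing the induction hypothesis with \textsc{All-I} and instantiating $\Delta^{\f}$. Your remark that $(\cdot)_{Ku}$ preserves free variables, so the \textsc{All-I} side condition holds, is a detail the paper leaves implicit.
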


\begin{proof}
For the first item, we complete the proof of Theorem~\ref{thm_c2i}: 
\begin{itemize}
\item \underline{\textsc{Eq-I}} derives from \textsc{Eq-I} and Proposition~\ref{prop}(\ref{prop_doubleneg}).
\item \underline{\textsc{Eq-E}}: We have $\Gamma_{Ku} \vdash_i^{\e\p} \neg\neg (P_{Ku} u_{Ku})$ and $\Gamma_{Ku} \vdash_i^{\e\p} \neg\neg (u_{Ku} = v_{Ku})$ by induction. We directly have $\Gamma_{Ku} \vdash_i^{\e\p} P_{Ku} u_{Ku} \imp (u_{Ku} = v_{Ku}) \imp P_{Ku} v_{Ku}$ by \textsc{Eq-E}. We get $\Gamma_{Ku} \vdash_i^{\e\p} \neg\neg (P_{Ku} u_{Ku}) \imp \neg\neg (u_{Ku} = v_{Ku}) \imp \neg\neg (P_{Ku} v_{Ku})$ using Proposition~\ref{prop}(\ref{prop_doubleneg}) and Proposition~\ref{prop}(\ref{prop_doublenegimp}). Therefore, we conclude $\Gamma_{Ku} \vdash_i^{\e\p} \neg\neg (P_{Ku} v_{Ku})$.
\item \underline{\textsc{PropExt}}: By induction, we have $\Gamma_{Ku} \vdash_i^{\e\p} \neg\neg (A_{Ku} \eqi B_{Ku})$. Using \textsc{PropExt}, we directly have $\Gamma_{Ku} \vdash_i^{\e\p} (A_{Ku} \eqi B_{Ku}) \imp (A_{Ku} = B_{Ku})$. We get $\Gamma_{Ku} \vdash_i^{\e\p} \neg\neg (A_{Ku} \eqi B_{Ku}) \imp \neg\neg (A_{Ku} = B_{Ku})$ by Proposition~\ref{prop}(\ref{prop_doubleneg}) and Proposition~\ref{prop}(\ref{prop_doublenegimp}). We conclude $\Gamma_{Ku} \vdash_i^{\e\p} \neg\neg (A_{Ku} = B_{Ku})$.
\end{itemize}
For the second item, we adapt the three cases of the first item and we prove the additional \underline{\textsc{FunExt}} case. We use the assumption $\Delta^{\f}$ on the induction hypothesis $\Delta^{\f}, \Gamma_{Ku} \vdash_i \neg\neg (f_{Ku} x = g_{Ku} x)$ to get $\Delta^{\f}, \Gamma_{Ku} \vdash_i \neg\neg (f_{Ku} = g_{Ku})$.
\end{proof}

The formula $\Delta^{\f}$ corresponds to a weak form of functional extensionality, where double negations have been inserted in front of equality predicates. In classical logic, it is equivalent to functional extensionality. We link $\Delta^{\f}$ to the well-known double-negation shift $(\fa x. \neg\neg A) \imp (\neg\neg \fa x. A)$, written $\dns$. Such a principle implies $\Delta^{\f}$ in intuitionistic logic with functional extensionality. 

\begin{Proposition}
$\dns \vdash_i^{\e\f} \Delta^{\f}$.
\end{Proposition}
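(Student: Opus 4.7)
The plan is to prove $\Delta^{\f}$ directly in intuitionistic logic with $\dns$ and \textsc{FunExt} available. After introducing $f$ and $g$ and assuming the hypothesis $H : \fa x.\, \neg\neg (f x = g x)$, the goal is $\neg\neg (f = g)$.

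First, I would instantiate $\dns$ with the formula $f x = g x$, that is, use the instance $(\fa x.\, \neg\neg (f x = g x)) \imp (\neg\neg \fa x.\, (f x = g x))$. Applying it to $H$ yields $\neg\neg \fa x.\, (f x = g x)$. This is where $\dns$ does the essential work of commuting the double negation past the universal quantifier.

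Second, I would establish the implication $\fa x.\, (f x = g x) \imp (f = g)$ inside intuitionistic logic with \textsc{FunExt}: assume $K : \fa x.\, (f x = g x)$, pick a variable $x$ fresh for $f, g$, apply \textsc{All-E} to $K$ to obtain $f x = g x$, and then apply \textsc{FunExt} to conclude $f = g$.

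Finally, I would combine these two facts using monotonicity of double negation. From the implication $\fa x.\, (f x = g x) \imp (f = g)$, Proposition~\ref{prop}(\ref{prop_doubleneg}) (applied inside the antecedent) and Proposition~\ref{prop}(\ref{prop_doublenegimp}) (to commute $\neg\neg$ across $\imp$) yield $\neg\neg \fa x.\, (f x = g x) \imp \neg\neg (f = g)$; modus ponens with the result of the first step gives the desired $\neg\neg (f = g)$. Alternatively, the same combination can be carried out directly: assume $\neg (f = g)$, use $\neg\neg \fa x.\, (f x = g x)$ together with the fact that $\fa x.\, (f x = g x)$ entails $f = g$ (via \textsc{FunExt}) to get $\bot$, and then close with \textsc{Not-I}. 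I do not anticipate a real obstacle; the only subtle point is confirming that $\dns$ is being used as a schema so it can be instantiated on the formula $f x = g x$, which is standard.
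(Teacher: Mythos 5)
Your proof is correct and follows essentially the same route as the paper: use \textsc{FunExt} to derive $(\fa x.\, f x = g x) \imp (f = g)$, lift it to $\neg\neg(\fa x.\, f x = g x) \imp \neg\neg(f = g)$ via Proposition~\ref{prop}(\ref{prop_doubleneg}) and Proposition~\ref{prop}(\ref{prop_doublenegimp}), and let $\dns$ supply the missing direction $(\fa x.\,\neg\neg(f x = g x)) \imp \neg\neg\fa x.\,(f x = g x)$. The only cosmetic difference is that you fix $f$ and $g$ up front instead of pushing the double negation under the outer quantifiers with Proposition~\ref{prop}(\ref{prop_doublenegfa}) as the paper does.
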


\begin{proof}
We have $\dns \vdash_i^{\e\f} \fa f \fa g. (\fa x. f x = g x) \imp (f = g)$ using \textsc{FunExt}. We successively apply Propositions~\ref{prop}(\ref{prop_doubleneg}), \ref{prop}(\ref{prop_doublenegfa}) and \ref{prop}(\ref{prop_doublenegimp}) to derive $\dns \vdash_i^{\e\f} \fa f \fa g. \neg\neg (\fa x. f x = g x) \imp \neg\neg (f = g)$. Using $\dns$, we conclude $\dns \vdash_i^{\e\f} \Delta^{\f}$.
\end{proof}

The double-negation shift is derivable in classical logic but not in intuitionistic logic. It is deeply connected to Glivenko's theorem---which is generalized by Kuroda's translation. More precisely, such a principle is required to extend Glivenko's theorem to first-order logic~\cite{umezawa,gabbay} and to substructural first-order logic~\cite{substructural}. Moreover, the modal counterpart of the double-negation shift $\square\neg\neg A \imp \neg\neg \square A$ plays an important role when studying Glivenko's theorem for modal logic~\cite{litak}. Here, we have emphasized a new connection between the double-negation shift and Kuroda's translation.

To prove the soundness property in the presence of functional extensionality, it is therefore sufficient to assume the weak form of functional extensionality or the double-negation shift. 

We could also assume the double-negation elimination on equality predicates $\fa x \fa y. \neg\neg (x = y) \imp x = y$, written $\dneq$. Indeed, it implies $\Delta^{\f}$ in intuitionistic logic with functional extensionality. 
However, such a principle must be handled with care: when combined with the comprehension axiom of set theory or with propositional extensionality, it entails the double-negation elimination. 
For instance, let us prove $\dneq \vdash_i^{\e\p} \neg\neg P \imp P$ for any proposition $P$. We easily derive $\vdash_i^{\e\p} \neg\neg P \imp \neg\neg (P = \top)$ and $\vdash_i^{\e} (P = \top) \imp P$.
The double-negation elimination then follows from $\dneq$ applied to $P$ and $\top$.

Another approach to get around the problem of the soundness property in the presence of functional extensionality is to apply a translation that eliminates extensionality~\cite{gandy,takeuti,elim_funext}. To do so, the Takeuti-Gandy translation~\cite{gandy,takeuti} resorts to relativization and quantifies over extensional elements. This translation was later refined~\cite{elim_funext} and coupled with a double-negation translation. 
The main advantage of our approach is that the translated statement remains close to the original statement, as no predicates have been inserted to account for the elimination of functional extensionality. In exchange, however, we assume an additional axiom.

\section{Characterization property}
\label{sec_charac}

We have shown that, if the formula $A$ is provable from $\Gamma$, then there exists an intuitionistic proof of $A^{Ku}$ from $\Gamma_{Ku}$. 
We now want to prove the characterization property: $A$ and $A^{Ku}$ are classically equivalent. 
Such a result is routine in first-order logic, but does not generally hold in higher-order logic. 

\subsection{Counter-example}
\label{sec_counterex}

Consider a constant $P$ of type $\o \ra \o$ and a proposition $A$. 
We would like to show $\vdash_c (P A)^{Ku} \eqi P A$, that is $\vdash_c \neg\neg (P A_{Ku}) \eqi P A$. 
As we are in classical logic, it is equivalent to derive $\vdash_c P A_{Ku} \eqi P A$. 
It is impossible to show this without further assumptions on the predicate $P$, even when $A_{Ku}$ and $A$ are equivalent. 
Here, the problem is that the translation may insert double negations inside the argument of the predicate $P$, but we cannot replace $A_{Ku}$ by $A$ in the absence of extensionality.

This issue is not the only consequence of the lack of extensionality in higher-order logic.
For example, $\top$ and $\top \conj \top$ are equivalent, but we cannot prove that $P \top$ and $P (\top \conj \top)$ are equivalent without further assumptions.
This can be shown using the \textit{intensional} models of higher-order logic developed by Takahashi~\cite{takahashi} and Prawitz~\cite{prawitz_hol}, based on the V-complex construction~\cite{andrews_resolution}.
In such models, the denotation of a proposition is not its truth value, but a pair consisting of its truth value and a syntactic value.
$\top$ and $\top \conj \top$ have the same truth value, but different syntactic values.
The truth value of $P (\top \conj \top)$ (respectively $P \top$) depends on the whole denotation of $\top \conj \top$ (respectively $\top$).
It follows that $P \top$ and $P (\top \conj \top)$ can have different truth values.
Thus, in the absence of extensionality, we cannot prove that $P \top$ and $P (\top \conj \top)$ are equivalent, even though $\top$ and $\top \conj \top$ are.
The same reasoning applies for $P A_{Ku}$ and $P A$.

\subsection{Necessary and sufficient condition}

When trying to adapt the proof of the characterization property from first-order logic to higher-order logic, we identify a necessary and sufficient condition, called Kuroda-equivalence.

\begin{Definition}[Kuroda-equivalence]
For any constant or variable $t$ of type $\tau_1 \ra \ldots \ra \tau_n \ra \o$ and any terms $u_1, \ldots, u_n$ of type $\tau_1, \ldots, \tau_n$, we have $\vdash_c (t u_1 \ldots u_n)_{Ku} \eqi t u_1 \ldots u_n$.
\end{Definition}

\begin{Theorem}
\label{thm_ku_eq}
The characterization property holds if and only if the Kuroda-equivalence holds.
\end{Theorem}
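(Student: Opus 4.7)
The plan is to prove the two directions separately, exploiting throughout that classically $\neg\neg B \eqi B$, so that we may freely pass between $A^{Ku} = \neg\neg A_{Ku}$ and the simpler $A_{Ku}$.

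For the direction ($\Rightarrow$), I would take any constant or variable $t$ of type $\tau_1 \ra \ldots \ra \tau_n \ra \o$ and any terms $u_1, \ldots, u_n$ of the appropriate types, and apply the characterization property to the formula $A \assign t u_1 \ldots u_n$. This immediately gives $\vdash_c \neg\neg (t u_1 \ldots u_n)_{Ku} \eqi t u_1 \ldots u_n$, and stripping the outer $\neg\neg$ classically yields precisely Kuroda-equivalence.

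For the direction ($\Leftarrow$), the key observation is that every formula $A$ has type $\o$, which is not a function type, so $A$ is $\beta$-equivalent to a head-form expression $h v_1 \ldots v_k$ whose head $h$ is a variable or a constant (by strong normalization of the simply-typed $\lambda$-calculus). Kuroda-equivalence supplies $\vdash_c (h v_1 \ldots v_k)_{Ku} \eqi h v_1 \ldots v_k$ directly; Corollary~\ref{conv_hol_Ku} shows $A_{Ku} \equiv_\beta (h v_1 \ldots v_k)_{Ku}$, so \textsc{Conv} transports the equivalence to $\vdash_c A_{Ku} \eqi A$, and a final classical double negation delivers $\vdash_c A^{Ku} \eqi A$.

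The main obstacle is that, unlike in first-order logic, the head symbol of a higher-order formula is not syntactically exposed until $\beta$-reduction is performed, which is exactly why the reduction to head form is needed. An alternative that avoids invoking strong normalization is a structural induction on $A$: the connectives and $\top, \bot$ follow immediately from the induction hypothesis, non-logical heads are covered by Kuroda-equivalence, and the quantifier case $A = \fa x. B$ uses Corollary~\ref{conv_hol_Ku} to expand $(\fa x. B)_{Ku} \equiv_\beta \fa x. \neg\neg B_{Ku}$ before the classical removal of the inner $\neg\neg$ reduces the goal to the induction hypothesis on $B$.
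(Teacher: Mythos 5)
Your proof is correct and follows essentially the same route as the paper: the forward direction applies the characterization property to $t\,u_1 \ldots u_n$ and strips the outer double negation classically, and the reverse direction passes to the $\beta$-normal form of $A$, whose head is a variable or constant, and invokes Kuroda-equivalence together with Corollary~\ref{conv_hol_Ku}. The only cosmetic difference is that the paper treats the case where the normal form is a bare variable or constant separately, whereas you absorb it as the $k=0$ instance of the head-form case.
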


\begin{proof}
For the direct implication, we apply the characterization property to the formula $t u_1 \ldots u_n$ to get $\vdash_c (t u_1 \ldots u_n)^{Ku} \eqi t u_1 \ldots u_n$, so we derive $\vdash_c (t u_1 \ldots u_n)_{Ku} \eqi t u_1 \ldots u_n$.

For the reverse implication, we only have to show $\vdash_c A_{Ku} \eqi A$ as we are in classical logic. 
Let $A_{\beta}$ be the $\beta$-normal form of $A$. 
We know that $A_{\beta} \equiv_{\beta} A$, so we derive $(A_{\beta})_{Ku} \equiv_{\beta} A_{Ku}$ by Corollary~\ref{conv_hol_Ku}. 
We do a proof by cases on $A_{\beta}$. 
If $A_{\beta}$ is a variable or a constant, $(A_{\beta})_{Ku}$ corresponds to $A_{\beta}$, therefore we directly have $\vdash_c (A_{\beta})_{Ku} \eqi A_{\beta}$ and we conclude $\vdash_c A_{Ku} \eqi A$.
Otherwise, $A_{\beta}$ is an application $t u_1 \ldots u_n$, where $t$ is a variable or a constant of type $\tau_1 \ra \ldots \ra \tau_n \ra \o$ and $u_1, \ldots, u_n$ are terms of type $\tau_1, \ldots, \tau_n$.
By the Kuroda-equivalence, we have $\vdash_c (A_{\beta})_{Ku} \eqi A_{\beta}$, hence we conclude $\vdash_c A_{Ku} \eqi A$.
\end{proof}
 
The counter-example~\ref{sec_counterex} precisely shows that the Kuroda-equivalence does not necessarily hold. 
To have the characterization property, we could impose the Kuroda-equivalence as a principle. However, such a condition is difficult to enforce, as it not only applies to the logical connectives and quantifiers, but also to any variable or constant representing a predicate.

\subsection{Extensionality conditions}

We now investigate the characterization property in the setting of extensionality. When assuming functional extensionality and propositional extensionality, the formulas $A$ and $A^{Ku}$ are equal, and therefore classically equivalent.

\begin{Lemma}
\label{prop_term_Ku}
For any term $t$, we have $\vdash_c^{\e\f\p} t_{Ku} = t$.
\end{Lemma}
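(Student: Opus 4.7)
The plan is a structural induction on $t$. The base cases where $t$ is a variable or a constant other than $\fa_\tau$ are immediate: the translation is the identity on such $t$, so the goal follows from \textsc{Eq-I}. The application case $t = t_1 t_2$ is routine as well, since the inductive hypotheses $(t_1)_{Ku} = t_1$ and $(t_2)_{Ku} = t_2$ yield $(t_1)_{Ku} (t_2)_{Ku} = t_1 t_2$ by two applications of \textsc{Eq-E} (congruence of equality under application, using the predicates $\lambda z. z\, (t_2)_{Ku} = t_1 t_2$ and $\lambda z. t_1 z = t_1 t_2$ together with \textsc{Eq-I}).

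For the abstraction case $t = \lambda x. u$, the induction hypothesis is $u_{Ku} = u$, and I want $\lambda x. u_{Ku} = \lambda x. u$. I would apply \textsc{FunExt} using the bound variable $x$ itself: since $x$ occurs only bound in both $\lambda x. u_{Ku}$ and $\lambda x. u$, the side condition $x \notin \fv(\lambda x. u_{Ku}, \lambda x. u)$ is satisfied (assuming the empty context, as the lemma has none). The premise $(\lambda x. u_{Ku}) x = (\lambda x. u) x$ then reduces via \textsc{Conv} to $u_{Ku} = u$, which is exactly the induction hypothesis. This sidesteps any need for a substitution lemma.

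The only genuine obstacle is the constant case $t = \fa_\tau$, where one must prove $\lambda p. \fa x. \neg\neg (p x) = \fa_\tau$ at type $(\tau \ra \o) \ra \o$. I would first apply \textsc{FunExt} with a fresh variable $p$ of type $\tau \ra \o$, reducing the goal to $(\lambda p. \fa x. \neg\neg (p x))\, p = \fa_\tau\, p$, which by \textsc{Conv} becomes $\fa_\tau (\lambda x. \neg\neg (p x)) = \fa_\tau\, p$. Using \textsc{Eq-E} with the predicate $\lambda q. \fa_\tau q = \fa_\tau\, p$ (together with \textsc{Eq-I}), it suffices to prove $\lambda x. \neg\neg (p x) = p$. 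A second use of \textsc{FunExt}, with the bound variable $x$, reduces this by \textsc{Conv} to $\neg\neg (p x) = p x$, and then \textsc{PropExt} reduces it to the biconditional $\neg\neg (p x) \eqi p x$, which is just the classical double-negation equivalence.

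The difficulty in this last case is that it is the only one where all three ingredients—\textsc{FunExt}, \textsc{PropExt}, and classical reasoning—have to be coordinated. All the other cases are either immediate or plain congruence, which makes sense in view of Brown and Rizkallah's earlier observation that functional extensionality is precisely what makes the higher-order picture different from first-order logic.
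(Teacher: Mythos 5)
Your proof is correct and follows essentially the same route as the paper: induction on $t$, with \textsc{Eq-I} for variables and non-quantifier constants, congruence via \textsc{Eq-E} for applications, \textsc{FunExt} plus \textsc{Conv} for abstractions, and the combination of \textsc{FunExt}, \textsc{Eq-E}, \textsc{PropExt} and the classical double-negation equivalence for the $\fa$ case. The only difference is presentational — you work goal-directed where the paper builds the derivation bottom-up — and the ingredients and their order of coordination in the $\fa$ case coincide exactly.
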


\begin{proof}
We proceed by induction on the term $t$:
\begin{itemize}
\item \underline{Variables and constants}: Using \textsc{Eq-I}, we directly derive $\vdash_c^{\e\f\p} x_{Ku} = x$ and $\vdash_c^{\e\f\p} c_{Ku} = c$ for $c \neq \fa$.  
We prove $\vdash_c^{\e\f\p} \fa_{Ku} = \fa$ with
\begin{prooftree}
\AxiomC{}
\RightLabel{\textsc{Eq-I}}
\UnaryInfC{$\vdash_c^{\e\f\p} \fa q = \fa q$}
\AxiomC{$\vdash_c^{\e\f\p} q y \eqi \neg\neg (q y)$}
\RightLabel{\textsc{PropExt}}
\UnaryInfC{$\vdash_c^{\e\f\p} q y = \neg\neg (q y)$}
\RightLabel{\textsc{FunExt} and \textsc{Conv}}
\UnaryInfC{$\vdash_c^{\e\f\p} q = (\lambda x. \neg\neg (q x))$}
\RightLabel{\textsc{Eq-E} and \textsc{Conv}}
\BinaryInfC{$\vdash_c^{\e\f\p} \fa (\lambda x. \neg\neg (q x)) = \fa q$}
\RightLabel{\textsc{FunExt} and \textsc{Conv}}
\UnaryInfC{$\vdash_c^{\e\f\p} (\lambda p. \fa (\lambda x. \neg\neg (p x))) = \fa$}
\end{prooftree}
as we have $\vdash_c^{\e\f\p} q y \eqi \neg\neg (q y)$ in classical logic.
\item \underline{Applications}: We have $\vdash_c^{\e\f\p} t_{Ku} u_{Ku} = t_{Ku} u_{Ku}$ using \textsc{Eq-I}. Using successively \textsc{Eq-E} with the induction hypotheses $\vdash_c^{\e\f\p} t_{Ku} = t$ and $\vdash_c^{\e\f\p} u_{Ku} = u$, we obtain $\vdash_c^{\e\f\p} t_{Ku} u_{Ku} = t u$.
\item \underline{Abstractions}: By induction hypothesis, we have $t_{Ku} = t$. Using \textsc{FunExt} and \textsc{Conv}, we derive $\vdash_c^{\e\f\p} (\lambda x. t)_{Ku} = \lambda x. t$.
\end{itemize}
\end{proof}

\begin{Theorem}
\label{prop_Ku_eqi}
For any higher-order formula $A$, we have $\vdash_c^{\e\f\p} A^{Ku} \eqi A$.
\end{Theorem}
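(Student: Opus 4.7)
The plan is to reduce the theorem entirely to the equality furnished by Lemma~\ref{prop_term_Ku} combined with the classical double-negation equivalence. The heavy lifting has already been done in the lemma, so essentially nothing remains except some bookkeeping with \textsc{Eq-E}.

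First I apply Lemma~\ref{prop_term_Ku} to the formula $A$, viewed as a term of type $\o$, to obtain $\vdash_c^{\e\f\p} A_{Ku} = A$. Next I propagate this equality inside the double negation: instantiating \textsc{Eq-E} with the predicate $\lambda y. (\neg\neg A_{Ku} \eqi \neg\neg y)$, starting from the trivial derivation of $\vdash_c^{\e\f\p} \neg\neg A_{Ku} \eqi \neg\neg A_{Ku}$ and rewriting along $A_{Ku} = A$, I obtain $\vdash_c^{\e\f\p} \neg\neg A_{Ku} \eqi \neg\neg A$, which unfolds (after a \textsc{Conv} step to account for the $\beta$-reduction in the predicate) to $\vdash_c^{\e\f\p} A^{Ku} \eqi \neg\neg A$. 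Finally, the classical double-negation equivalence gives $\vdash_c \neg\neg A \eqi A$, and chaining the two biconditionals yields $\vdash_c^{\e\f\p} A^{Ku} \eqi A$.

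There is no real obstacle: Lemma~\ref{prop_term_Ku} has already absorbed the delicate work, in particular the case of the $\fa$ constant, whose translation into $\lambda p. \fa x. \neg\neg (p x)$ was shown to equal $\fa$ using \textsc{FunExt} and \textsc{PropExt}. The only care needed is to choose the predicate in \textsc{Eq-E} so that the $\beta$-conversion matches the target formula. An equivalent and perhaps more transparent route is to first derive $\vdash_c^{\e\f\p} A_{Ku} \eqi A$ directly from $A_{Ku} = A$ (via \textsc{Eq-E} applied to $A_{Ku} \eqi A_{Ku}$), then invoke the standard intuitionistic fact that $B \eqi C$ entails $\neg\neg B \eqi \neg\neg C$ to lift it to $\neg\neg A_{Ku} \eqi \neg\neg A$, and conclude as above using the classical double-negation equivalence.
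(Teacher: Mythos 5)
Your proposal is correct and follows essentially the same route as the paper: invoke Lemma~\ref{prop_term_Ku} to get $\vdash_c^{\e\f\p} A_{Ku} = A$, transport a trivial biconditional along this equality via \textsc{Eq-E}, and dispose of the outer double negation using the classical double-negation equivalence. Your ``more transparent'' variant at the end is in fact exactly the paper's proof, which applies \textsc{Eq-E} to the tautology $A \eqi A$ to obtain $A_{Ku} \eqi A$ and then concludes classically.
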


\begin{proof}
By Lemma~\ref{prop_term_Ku}, $\vdash_c^{\e\f\p} A_{Ku} = A$. Using \textsc{Eq-E} on the tautology $A \eqi A$, we get $\vdash_c^{\e\f\p} A_{Ku} \eqi A$. We conclude $\vdash_c^{\e\f\p} A^{Ku} \eqi A$ as we are in classical logic.
\end{proof}

Note that Lemma~\ref{prop_term_Ku} imply the Kuroda-equivalence, which provides an alternative proof of Theorem~\ref{prop_Ku_eqi}.

\section{Reverse translation property}
\label{sec_rev}

The characterization property does not generally hold in higher-order logic.
In this section, we show that the reverse translation property---a weaker property---does not hold either. 

As in Section~\ref{sec_soundness}, we state the reverse translation property with $\Gamma_{Ku}$ instead of $\Gamma^{Ku}$.
Again, all the results can be simply adapted to have the version with $\Gamma^{Ku}$.

\subsection{Counter-example}

To prove that the reverse translation property does not hold in higher-order logic, we exhibit some $\Gamma$ and $A$ such that $\Gamma_{Ku} \vdash_i A^{Ku}$ holds and $\Gamma \vdash_c A$ does not. 
Consider a constant R of type $\o \ra \o \ra \o$ and two predicates $P$ and $P'$ of type $\tau \ra \o$.
We take
\[
  \begin{array}{lll}
        \Gamma &\assign &\fa q. R (q (\lambda x. \neg\neg (P x))) (q (\lambda x. \neg\neg (P' x))) \\
        A &\assign &R (\fa (\lambda x. P x)) (\fa (\lambda x. P' x)) \\
  \end{array}
\]
and we therefore have
\[
  \begin{array}{lll}
        \Gamma_{Ku} &\equiv_{\beta} &\fa q. \neg\neg (R (q (\lambda x. \neg\neg (P x))) (q (\lambda x. \neg\neg (P' x)))) \\
        A^{Ku} &\equiv_{\beta} &\neg\neg (R (\fa (\lambda x. \neg\neg (P x))) (\fa (\lambda x. \neg\neg (P' x)))) \\
  \end{array}
\]
In intuitionistic logic, we derive $\Gamma_{Ku} \vdash_i A^{Ku}$ using \textsc{All-E} with $\fa$.

In classical logic, however, we cannot derive $\Gamma \vdash_c A$.
Indeed, we cannot apply \textsc{All-E} with $\fa$ anymore: the extra double negations inside the arguments of $R$ cannot be removed without further assumptions.
We cannot apply \textsc{All-E} with $\lambda y. \fa (\lambda x. P x)$ either, as we would obtain $R (\fa (\lambda x. P x)) (\fa (\lambda x. P x))$ instead of $R (\fa (\lambda x. P x)) (\fa (\lambda x. P' x))$.
As well as for the characterization property, it is the absence of extensionality that allows us to build a counter-example.

\subsection{Sufficient conditions}

Since the characterization property implies the reverse translation property, the sufficient conditions investigated in Section~\ref{sec_charac} apply. 
The reverse translation property holds under the Kuroda-equivalence.

\begin{Theorem}
If the Kuroda-equivalence holds, then the reverse translation property holds.
\end{Theorem}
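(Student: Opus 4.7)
The plan is to exploit Theorem~\ref{thm_ku_eq}, which tells us that the Kuroda-equivalence is equivalent to the characterization property $\vdash_c A^{Ku} \eqi A$. Once we have this characterization in hand for arbitrary formulas, the reverse translation property becomes a routine consequence, following the same pattern used in the first-order case: pass the intuitionistic derivation into classical logic, then use the characterization to rewrite both the conclusion and each hypothesis.

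Concretely, I would first assume the Kuroda-equivalence and invoke Theorem~\ref{thm_ku_eq} to obtain $\vdash_c A^{Ku} \eqi A$ for every higher-order formula $A$. Since $A^{Ku} = \neg\neg A_{Ku}$ and classical logic validates the double-negation equivalence, this also yields $\vdash_c A_{Ku} \eqi A$. Next, assuming $\Gamma_{Ku} \vdash_i A^{Ku}$, I would observe that any intuitionistic derivation is also a classical derivation, so $\Gamma_{Ku} \vdash_c A^{Ku}$. Applying the characterization to $A$ (via \textsc{Imp-E} on the implication $A^{Ku} \imp A$ extracted from the biconditional) gives $\Gamma_{Ku} \vdash_c A$.

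It then remains to move from the context $\Gamma_{Ku}$ to the context $\Gamma$. For each formula $B$ occurring in $\Gamma$, the characterization provides $\vdash_c B_{Ku} \eqi B$, hence $\Gamma \vdash_c B_{Ku}$ by \textsc{Ax} and \textsc{Imp-E}. Cutting these derivations against $\Gamma_{Ku} \vdash_c A$ (formally, by iterated \textsc{Imp-I} on $\Gamma_{Ku} \vdash_c A$ to produce a closed implication and then iterated \textsc{Imp-E} using the derivations $\Gamma \vdash_c B_{Ku}$) yields $\Gamma \vdash_c A$, which is the desired reverse translation property.

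No step here poses a genuine obstacle: the only mild subtlety is keeping track of the distinction between $A_{Ku}$ and $A^{Ku} = \neg\neg A_{Ku}$, which collapses in classical logic, and ensuring that the context translation is handled hypothesis by hypothesis via cut rather than by a single global rewriting. The work is already done by Theorem~\ref{thm_ku_eq}; this theorem just packages its consequence for judgments with arbitrary contexts.
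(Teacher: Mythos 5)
Your proposal is correct and follows essentially the same route as the paper: pass the intuitionistic derivation into classical logic, invoke Theorem~\ref{thm_ku_eq} to obtain the characterization property, and use it to rewrite the conclusion and each context formula (the paper phrases the context step by first moving from $\Gamma_{Ku}$ to $\Gamma^{Ku}$ via double-negation elimination, but the substance is identical to your hypothesis-by-hypothesis cut). No gap.
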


\begin{proof}
Suppose $\Gamma_{Ku} \vdash_i A^{Ku}$.
We directly have $\Gamma_{Ku} \vdash_c A^{Ku}$, and we get $\Gamma^{Ku} \vdash_c A^{Ku}$ using the double-negation elimination.
We use the characterization property, that holds under Kuroda-equivalence by Theorem~\ref{thm_ku_eq}.
\end{proof}

The reverse translation property also holds under functional extensionality and propositional extensionality.

\begin{Theorem}
Let $A$ be a formula and $\Gamma$ be a context in higher-order logic.
\begin{enumerate}
\item If $\Gamma_{Ku} \vdash_i A^{Ku}$ then $\Gamma \vdash_c^{\e\f\p} A$.
\item For $* \in \{ \e, \e\p \}$, if $\Gamma_{Ku} \vdash_i^* A^{Ku}$ then $\Gamma \vdash_c^{\e\f\p} A$.
\item For $* \in \{ \e\f, \e\f\p \}$, if $\Delta^{\f}, \Gamma_{Ku} \vdash_i^* A^{Ku}$ then $\Gamma \vdash_c^{\e\f\p} A$.
\end{enumerate}
\end{Theorem}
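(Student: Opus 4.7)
The plan is to reduce all three items to a single argument: lift the intuitionistic hypothesis to a derivation in $\vdash_c^{\e\f\p}$, then eliminate the Kuroda translations in both context and conclusion using the extensionality-based equalities provided by Lemma~\ref{prop_term_Ku} and Theorem~\ref{prop_Ku_eqi}.

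First I would observe the monotonicity $\vdash_i^* \,\subseteq\, \vdash_c^{\e\f\p}$ for every $* \in \{\e, \e\p, \e\f, \e\f\p\}$, since intuitionistic logic is a subsystem of classical logic and $* \subseteq \e\f\p$. Thus for items~1 and~2, from $\Gamma_{Ku} \vdash_i^{(*)} A^{Ku}$ we immediately obtain $\Gamma_{Ku} \vdash_c^{\e\f\p} A^{Ku}$. For item~3, the same monotonicity yields $\Delta^{\f}, \Gamma_{Ku} \vdash_c^{\e\f\p} A^{Ku}$, and it remains to discharge the extra hypothesis $\Delta^{\f}$. But in classical logic with \textsc{FunExt}, $\Delta^{\f}$ is derivable: \textsc{FunExt} gives $\fa f \fa g.\, (\fa x.\, f x = g x) \imp f = g$, and since $\neg\neg B \eqi B$ classically, this is equivalent to $\Delta^{\f}$ via Proposition~\ref{prop}(\ref{prop_doubleneg}) and~(\ref{prop_doublenegimp}). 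So in all three items we arrive at $\Gamma_{Ku} \vdash_c^{\e\f\p} A^{Ku}$.

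Next, I would eliminate the Kuroda translations. By Lemma~\ref{prop_term_Ku}, for every formula $B$ in $\Gamma$ we have $\vdash_c^{\e\f\p} B_{Ku} = B$, so an application of \textsc{Eq-E} with the predicate $\lambda x.\, x$ (and \textsc{Conv}) yields $B \vdash_c^{\e\f\p} B_{Ku}$; hence $\Gamma \vdash_c^{\e\f\p} B_{Ku}$ for each $B \in \Gamma$. Cutting these derivations against $\Gamma_{Ku} \vdash_c^{\e\f\p} A^{Ku}$ produces $\Gamma \vdash_c^{\e\f\p} A^{Ku}$. Finally, Theorem~\ref{prop_Ku_eqi} gives $\vdash_c^{\e\f\p} A^{Ku} \eqi A$, from which $\Gamma \vdash_c^{\e\f\p} A$ follows by \textsc{Imp-E}.

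There is essentially no hard step here: the content of the argument has already been packaged into Lemma~\ref{prop_term_Ku} and Theorem~\ref{prop_Ku_eqi}. The only minor subtlety is the classical derivability of $\Delta^{\f}$ from \textsc{FunExt}, which is what allows item~3 to match the same conclusion as items~1 and~2.
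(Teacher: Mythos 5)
Your proof is correct and follows essentially the same route as the paper: lift the intuitionistic derivation into $\vdash_c^{\e\f\p}$, discharge $\Delta^{\f}$ via its classical derivability from \textsc{FunExt}, and then erase the translation using the extensionality results of Section~\ref{sec_charac}. The only cosmetic difference is in the context: the paper passes through $\Gamma^{Ku}$ by double-negation elimination and then applies Theorem~\ref{prop_Ku_eqi} to each hypothesis, whereas you rewrite each $B_{Ku}$ to $B$ directly with Lemma~\ref{prop_term_Ku} and \textsc{Eq-E}; both are fine.
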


\begin{proof}
For the first and second item, we directly have $\Gamma_{Ku} \vdash_c^{\e\f\p} A^{Ku}$. 
We get $\Gamma^{Ku} \vdash_c^{\e\f\p} A^{Ku}$ using the double-negation elimination, and then we derive $\Gamma \vdash_c^{\e\f\p} A$ using Theorem~\ref{prop_Ku_eqi}. 
For the third item, we similarly get $\Delta^{\f}, \Gamma \vdash_c^{\e\f\p} A$, and we conclude using the fact that $\Delta^{\f}$ is derivable from functional extensionality in classical logic.
\end{proof}

\section{Conclusion}

Brown and Rizkallah~\cite{brown_rizkallah} extended Kuroda's translation to higher-order logic, showing that the soundness property holds, except in the presence of functional extensionality.
We have discussed several conditions under which the soundness property holds in the presence of functional extensionality, including the double-negation shift.

Brown and Rizkallah neither examined the characterization property nor the reverse translation property. 
We have shown that both properties do not hold in higher-order logic, although they are straightforward in first-order logic.
We have identified conditions under which such properties hold. 
In particular, it is sufficient to assume functional extensionality and propositional extensionality.  

We have seen that the role of both functional extensionality and propositional extensionality is predominant when extending Kuroda's translation to higher-order logic. 
One has to be careful when using both principles in an intuitionistic context: the Diaconescu-Goodman-Myhill theorem~\cite{diaconescu,goodman_myhill} states that, together with the axiom of choice, they entail the principle of excluded middle.

\section*{Acknowledgments} The author would like to thank Marc Aiguier, Gilles Dowek, Olivier Hermant and the anonymous reviewer for their helpful remarks about this work, and Dominik Kirst for pointing out that the double-negation elimination on equality predicates may entail the double-negation elimination.

\bibliographystyle{alpha}
\bibliography{biblio}

\begin{thebibliography}{{Mot}67}

\bibitem[And71]{andrews_resolution}
Peter~B. Andrews.
\newblock Resolution in type theory.
\newblock {\em The Journal of Symbolic Logic}, 36(3):414--432, 1971.

\bibitem[BR13]{elim_funext}
Chad~E. Brown and Christine Rizkallah.
\newblock {From Classical Extensional Higher-Order Tableau to Intuitionistic
  Intentional Natural Deduction}.
\newblock In Jasmin~Christian Blanchette and Josef Urban, editors, {\em {PxTP
  2013. Third International Workshop on Proof Exchange for Theorem Proving}},
  volume~14 of {\em EPiC Series in Computing}, pages 27--42. EasyChair, 2013.

\bibitem[BR14]{brown_rizkallah}
Chad~E. Brown and Christine Rizkallah.
\newblock {Glivenko and Kuroda for simple type theory}.
\newblock {\em The Journal of Symbolic Logic}, 79(2):485--495, 2014.

\bibitem[Chu40]{church}
Alonzo Church.
\newblock {A Formulation of the Simple Theory of Types}.
\newblock {\em The Journal of Symbolic Logic}, 5(2):56--68, 1940.

\bibitem[Dia75]{diaconescu}
Radu Diaconescu.
\newblock Axiom of choice and complementation.
\newblock {\em Proceedings of the American Mathematical Society},
  51(1):176--178, 1975.

\bibitem[FO12a]{substructural}
Hadi Farahani and Hiroakira Ono.
\newblock Glivenko theorems and negative translations in substructural
  predicate logics.
\newblock {\em Archive for Mathematical Logic}, 51(7-8):695--707, 2012.

\bibitem[FO12b]{various_neg_trans}
Gilda Ferreira and Paulo Oliva.
\newblock {On the Relation Between Various Negative Translations}.
\newblock In Ulrich Berger, Hannes Diener, Peter Schuster, and Monika
  Seisenberger, editors, {\em Logic, Construction, Computation}, pages
  227--258. De Gruyter, Berlin, Boston, 2012.

\bibitem[Gab72]{gabbay}
Dov~M. Gabbay.
\newblock Applications of trees to intermediate logics.
\newblock {\em The Journal of Symbolic Logic}, 37(1):135--138, 1972.

\bibitem[Gan56]{gandy}
R.~O. Gandy.
\newblock {On the Axiom of Extensionality--Part I}.
\newblock {\em The Journal of Symbolic Logic}, 21(1):36--48, 1956.

\bibitem[Gas13]{gaspar}
Jaime Gaspar.
\newblock Negative translations not intuitionistically equivalent to the usual
  ones.
\newblock {\em Studia Logica: An International Journal for Symbolic Logic},
  101(1):45--63, 2013.

\bibitem[Gen36]{gentzen1936}
Gerhard Gentzen.
\newblock {Die Widerspruchsfreiheit der Reinen Zahlentheorie}.
\newblock {\em Mathematische Annalen}, 112:493--565, 1936.

\bibitem[Gli28]{glivenko1928}
Valery Glivenko.
\newblock {Sur quelques points de la logique de M. Brouwer}.
\newblock {\em Bulletins de la classe des sciences}, 15:183–188, 1928.

\bibitem[GM78]{goodman_myhill}
Nicholas~D. Goodman and John Myhill.
\newblock Choice implies excluded middle.
\newblock {\em Zeitschrift fur mathematische Logik und Grundlagen der
  Mathematik}, 24(25-30):461--461, 1978.

\bibitem[Gö33]{godel1933}
Kurt Gödel.
\newblock {Zur intuitionistischen Arithmetik und Zahlentheorie}.
\newblock {\em Ergebnisse eines Mathematischen Kolloquiums}, 4:34–38, 1933.

\bibitem[Kol25]{kolmogorov1925}
Andrey~Nikolaevich Kolmogorov.
\newblock O principe tertium non datur.
\newblock {\em Matematicheskiĭ Sbornik}, 32:646–667, 1925.

\bibitem[Kur51]{kuroda1951}
Sigekatu Kuroda.
\newblock {Intuitionistische Untersuchungen der formalistischen Logik}.
\newblock {\em Nagoya Mathematical Journal}, 2:35–47, 1951.

\bibitem[LPR17]{litak}
Tadeusz Litak, Miriam Polzer, and Ulrich Rabenstein.
\newblock {Negative Translations and Normal Modality}.
\newblock In Dale Miller, editor, {\em {2nd International Conference on Formal
  Structures for Computation and Deduction (FSCD 2017)}}, volume~84 of {\em
  Leibniz International Proceedings in Informatics (LIPIcs)}, pages
  27:1--27:18, Dagstuhl, Germany, 2017. Schloss Dagstuhl -- Leibniz-Zentrum
  f{\"u}r Informatik.

\bibitem[{Mot}67]{takahashi}
{Moto-o Takahashi}.
\newblock {A proof of cut-elimination theorem in simple type-theory}.
\newblock {\em Journal of the Mathematical Society of Japan}, 19(4):399 -- 410,
  1967.

\bibitem[Pra68]{prawitz_hol}
Dag Prawitz.
\newblock Hauptsatz for higher order logic.
\newblock {\em The Journal of Symbolic Logic}, 33(3):452--457, 1968.

\bibitem[Tak53]{takeuti}
Gaisi Takeuti.
\newblock On a generalized logic calculus.
\newblock {\em Japanese journal of mathematics: transactions and abstracts},
  23:39--96, 1953.

\bibitem[TvD88]{troelstra_vandalen}
Anne~Sjerp Troelstra and Dirk van Dalen.
\newblock {\em Constructivism in Mathematics: An Introduction, Volume 1},
  volume 121 of {\em Studies in Logic and the Foundations of Mathematics}.
\newblock North-Holland Publishing, Amsterdam, 1988.

\bibitem[Ume59]{umezawa}
Toshio Umezawa.
\newblock On logics intermediate between intuitionistic and classical predicate
  logic.
\newblock {\em The Journal of Symbolic Logic}, 24(2):141--153, 1959.

\bibitem[vdB19]{kuroda_j}
Benno van~den Berg.
\newblock {A Kuroda-style j-translation}.
\newblock {\em Archive for Mathematical Logic}, 58(5-6):627--634, aug 2019.

\end{thebibliography}

\end{document}